\newtheorem{prop}{Proposition}
\begin{document}

\title{\LARGE \bf
Game-theoretic Approach to Decision-making Problem for \\ Blockchain Mining
}

\author{Kosuke Toda, Naomi Kuze, \IEEEmembership{Member, IEEE} and Toshimitsu Ushio, \IEEEmembership{Member, IEEE}
\thanks{This work was supported by JST-ERATO HASUO Project Grant Number JPMJER1603, Japan.}
\thanks{The authors are with the Graduate School of Engineering Science, Osaka University, Toyonaka 560-8531, Japan
(email: toda@hopf.sys.es.osaka-u.ac.jp; kuze, ushio@sys.es.osaka-u.ac.jp).
}}

\maketitle
\thispagestyle{empty} 

\begin{abstract}
  It is an important decision-making problem for a miner in the blockchain networks if he/she participates in the mining so that he/she earns a reward by creating a new block earlier than other miners.
  We formulate this decision-making problem as a noncooperative game, because the probability of creating a block depends not only on one's own available computational resources, but also those of other miners.
  Through theoretical and numerical analyses, we show a hysteresis phenomenon of Nash equilibria depending on the reward and a jump phenomenon of miner decisions by a slight change in reward.
  We also show that the reward for which miners decide not to participate in the mining becomes smaller as the number of miners increases.
\end{abstract}

\begin{keywords}
	Blockchain, Proof of Work, Decision-making, Game Theory, Hysteresis.
\end{keywords}

\section{Introduction}
\label{sec:introduction}
Blockchain is a distributed ledger technology for recording transactions that underlies various services such as the digital currency Bitcoin~\cite{nakamoto}.
Blockchain-based services use cryptography to record transactions as a chain of blocks.
A block consists of a block header and transaction data.
The block header contains a cryptographic hash of the previous block, making blockchain-based services resistant to tampering.
In these services, users called \textit{miners} create blocks in a distributed manner, and the longest chain of blocks, called the \textit{main chain}, is considered to be legitimate.
The process of creating blocks is called \textit{mining}.
Blockchain-based services approve transactions through a consensus algorithm,
typically proof-of-work (PoW).
In this algorithm, the mining difficulty is set using a 4-byte value called a \textit{nonce} in the block header.
To create a block, miners must find a nonce such that the cryptographic hash value for the previous block satisfies specific conditions, determined according to the mining difficulty.
In general, a cryptographic hash value for a block is unique according to the nonce contained in the block.
Moreover, a nonce that satisfies the specific conditions cannot be calculated directly.
This is an exhaustive search that imposes a large computational cost on miners.
Consequently, PoW also contributes to the resistance to tampering.
Because transaction approvals depend on miner calculations, miner incentives are important to maintain blockchain-based services.
When a miner successes in creating a block and the created block is contained in the main chain, he/she gets a reward.

Game theory is used to analyze the interaction among rational decision-makers~\cite{game_t}.
A Nash equilibrium is the most accepted solution concept for a non-cooperative game.
Intuitively, it is the most rational solution for all decision-makers in the sense that no decision-maker has a reason for changing his/her decision if the others maintain their decisions.
A non-cooperative game is applied in various fields, such as network security~\cite{network_security} and resource management~\cite{network_communication}.

Because miners make decisions in a distributed and selfish manner, many studies have adopted game theory to analyze their behavior~\cite{survey_game}.
In particular, mining involves significant energy consumption~\cite{bl_energy,mining_cost}, so it is important to analyze miner behavior considering both energy consumption and the expected reward.
Dimitri~\cite{contest} discussed the computational resources needed for mining under a given computation cost and showed that the decision on investment for mining depends only on the average mining cost.
Fiat et~al.~\cite{energy_eq} discussed computational resources spent for mining under an upper limit on time units for hash calculations, showing that all miners use all available resources.
These previous studies implicitly assumed blocks will always be created within a given cost or number of hash calculations.
However, these assumptions are not practical for actual blockchain-based services.

In this paper, we formulate the energy consumption under the condition that miners keep calculating while paying the cost without the upper limit until the block is created. We adopt a deterministic game approach to analyze the relation between mining reward amount and the decision-making regarding participation in mining considering energy consumption and the expected reward.
Specifically, we formulate a utility function according to energy consumption and the mean reward for mining, and model the decision-making problem of miners as a noncooperative game.
Through theoretical and numerical analyses, we show a hysteresis phenomenon of Nash equilibria depending on the reward and a jump phenomenon of miner decisions by a slight change of the reward.
The remainder of this paper is organized as follows.
In Section~\ref{sec:pre}, we formulate the decision-making problem as a noncooperative game.
In Section~\ref{sec:two_miners}, we analyze the Nash equilibrium of the game in the case of two miners.
Section~\ref{sec:numerical_a} presents a numerical analysis.

\section{Game Formulation}\label{sec:pre}
\subsection{Miner decision-making as a game}\label{sec:min}
It is an important decision-making problem for a miner in the blockchain networks if he/she participates in the mining so that he/she earns a reward by creating a new block earlier than other miners.
We formulate this decision-making problem as a non-cooperative game because the probability of creating a block depends not only on one's own available computational resources, but also those of other miners.

We denote $\mathcal{N} = \{1, 2, \ldots, n\}$ as a set of miners ($n \geq 2$) in a blockchain network. Each miner $k \in \mathcal{N}$ has a strategy set $S_k = \{0, 1\}$.
The strategy $s_k = 1$ denotes that miner $k$ participates in the mining and $s_k = 0$ denotes that miner $k$ does not participate in the mining.
Let $s = (s_1, \ldots, s_n)$ and $S = \times_{k \in \mathcal{N}} S_k$ be a strategy profile and the set of strategy profiles, respectively.
We denote $U: S \to \mathbb{R}^n$ as a utility function for all miners and $U_k: S \to \mathbb{R}$ as a utility function for miner $k \in \mathcal{N}$, that is, $U(s) = (U_1(s), \ldots, U_n(s))$ for a given strategy profile $s$.
Then, the game is described as the tuple
\begin{equation}
  G_n = (\mathcal{N}, S, U). \label{eq:n_game_general}
\end{equation}
In Section~\ref{sec:util_def}, we will derive the utility function $U_k$.

\subsection{Derivation of the utility function}\label{sec:util_def}
To create a new block, a miner calculates a hash value $H(\mbox{tx}, \; \mbox{prev.hash}, \; \mbox{nc})$ using the data of the previous block, namely, the Merkle root of transactions $\mbox{tx}$, the hash of previous block header $\mbox{prev.hash}$, and the nonce $\mbox{nc}$.
The hash function~$H$ outputs an $L$-bit hash value ($L \in \mathbb{N}$) according to inputs $\mbox{tx}$, $\mbox{prev.hash}$, and $\mbox{nc}$.
In PoW~\cite{nakamoto,consensus}, the miner needs to find a nonce that satisfies
\begin{equation}
  H(\mbox{tx}, \; \mbox{prev.hash}, \; \mbox{nc}) \leq 2^{L - h}. \label{eq:hash_satisfy}
\end{equation}

For a given target value $2^{L - h}$ in \eqref{eq:hash_satisfy}\footnote{
Note that $h \in \mathbb{N}$. In this context, $h$ corresponds to the difficulty of finding a nonce. The larger $h$ is, the more time is needed for miners to find a nonce. The selection of values $L$ and $h$ depends on the blockchain services.
In Bitcoin~\cite{nakamoto}, $L = 256$ and difficulty $h$ is set so that the average number of generated blocks per hour is constant.
}, the probability that a miner creates a block with one hash calculation is
\begin{equation}
  \mathbb{P}\left[H(\mbox{tx}, \; \mbox{prev.hash}, \; \mbox{nc}) \leq 2^{L - h}\right] = \frac{1}{D}, \nonumber
\end{equation}
where $D = 2^h$~\cite{survey_consensus}.
The relation between blocks and the times they are created is modeled by a Poisson process~\cite{bitcoin_game}.
Let $w_k$ be the average number of queries to $H(\cdot)$ of miner $k \in \mathcal{N}$ calculated per unit operating time.
The rate $\lambda_k$ of the Poisson process for miner $k$ is given by $\lambda_k = w_k / D$~\cite{dif}.

When the miner $k$ participates in the mining, he/she needs a cost $c_k \geq 0$ per unit operating time and calculates queries whose average number per unit operating time depends on the cost, that is, we assume that $w_k = f_k(c_k)$, where $f_k: \mathbb{R}_{+} \to \mathbb{R}_{+}$ is a non-decreasing function.
If miner $k$ chooses $s_k = 1$, then the rate of the Poisson process is
\begin{equation}
  \lambda_k = \frac{s_k f_k(c_k)}{D}. \nonumber
\end{equation}

First, we calculate the expected reward for mining.
Let $\mathcal{M} \subseteq \mathcal{N}$ be the set of miners who choose to participate in the mining.
We assume that all miners in $\mathcal{M}$ start trying to create a new block at time $t = 0$ and that they create same-size blocks. The first miner to create a block that reaches consensus earns a reward $R \geq 0$ ($R \in \mathbb{R}_{+}$)\footnote{The reward includes a fixed reward and a variable one depending on the block size. From the assumption that all miners create same-size blocks, reward $R$ is independent of the miners.}.
Let $B_k(t)$ be the probability of miner $k$ creating a block before other miners be between $t$ and $t + dt$. Then, using the properties of the Poisson process, we have
\begin{align}
  &B_k(t) \nonumber \\
  &=\exp(-\lambda_k t) \lambda_k dt \exp(-\lambda_k dt) \hspace{-2mm} \prod_{i \in \mathcal{M} \setminus \{k\}} \hspace{-2mm} \exp(-\lambda_i (t + dt)) \nonumber \\
  &\approx \lambda_k \exp\left(-\sum_{i \in \mathcal{M}}\lambda_i t\right) dt = \lambda_k \exp\left(-\sum_{i \in \mathcal{N}}\lambda_i t\right) dt. \label{eq:k_t_dt}
\end{align}
From the assumption that all miners create same-size blocks, the probability of earning the reward equals the probability of creating the block~\cite{bitcoin_game}. Then, for any miner $k \in \mathcal{N}$, the probability $P_k(s)$ of earning the reward is given by the integration of \eqref{eq:k_t_dt} in the interval $[0, \infty)$:
\begin{align}
  P_k(s) &= \int_{0}^{\infty} \lambda_k \exp \left(-\sum_{i \in \mathcal{N}} \lambda_i t\right) dt \nonumber \\
  &= \frac{\lambda_k}{\sum_{i \in \mathcal{N}} \lambda_i} = \frac{s_k f_k(c_k)}{\sum_{i \in
  \mathcal{N}} s_i f_i(c_i)}. \nonumber
\end{align}
Note that the probability of miner $k$ earning the reward is $0$ when miner $k$ chooses $s_k = 0$.
Therefore, for any miner $k \in \mathcal{N}$, the expected reward $R_k(s)$ is
\begin{equation}
  R_k(s) = \frac{s_k f_k(c_k)}{\sum_{i \in \mathcal{N}} s_i f_i(c_i)}R. \label{eq:exp_reward}
\end{equation}

Next, we calculate the expected cost for creating a block in the same way as calculating the expected reward.
Assume miner $k$ consumes cost $c_k$ per unit operating time until a nonce that satisfies \eqref{eq:hash_satisfy} is found.
The expected cost for mining $CS_k(s)$ is
\begin{align}
  CS_k(s) &= c_k \int_{0}^{\infty} t \lambda_k \exp\left(- \sum_{i \in \mathcal{N}} \lambda_it \right) dt \nonumber \\
  &= \frac{c_k \lambda_k}{(\sum_{i \in \mathcal{N}}\lambda_i)^2} = \frac{s_k f_k(c_k)}{\sum_{i \in \mathcal{N}}s_i f_i(c_i)}\frac{Dc_k}{\sum_{i \in \mathcal{N}} s_i f_i(c_i)}. \label{eq:cst}
\end{align}

Finally, we define the utility function
\begin{align}
  &U_k(s) \nonumber \\
   &= \begin{cases}
    0 & \mbox{if} \; \; s = (0, \ldots, 0), \\
    \frac{s_k f_k(c_k)}{\sum_{i \in \mathcal{N}}s_i f_i(c_i)} \left(R - \frac{Dc_k}{\sum_{i \in \mathcal{N}}s_i f_i(c_i)}\right) &\mbox{otherwise}
  \end{cases}\label{eq:utility}
\end{align}
as the difference between \eqref{eq:exp_reward} and \eqref{eq:cst}. Note that when no miners decide to participate in the mining, $U_k(s)=0$ for all miners.

\subsection{Extension to mixed strategies}\label{sec:mixed_str}
We denote $X = \times_{k \in \mathcal{N}} X_k$ as a mixed strategy space where
\begin{equation}
  X_k = \{x_k = (x_k^0, 1 - x_k^0)^{\mathrm{T}} \, | \, 0 \leq x_k^0 \leq 1 \}, \nonumber
\end{equation}
and $x_k^0$ represents the probability of miner $k$ choosing $s_k = 0$.
Let $x = (x_1, \ldots, x_n)$ and $x_{-k}$ represent a mixed strategy profile for all miners and for all miners except $k$, respectively.
We define $e_m^i$ as an $m$-dimensional unit vector in which only the $i$-th component is 1.
Note that mixed strategy profiles $(e_2^1, \ldots, e_2^1)$ and $(e_2^2, \ldots, e_2^2)$ correspond to pure strategy profiles $s = (0, \ldots, 0)$ and $s = (1, \ldots, 1)$, respectively.
Then, for any mixed strategy profile $x$, the expected utility function $u_k: X \to \mathbb{R}$ of miner $k \in \mathcal{N}$ is
\begin{equation}
  u_k(x) = x_k^0 u_k(e_2^1, x_{-k}) + (1 - x_k^0) u_k(e_2^2, x_{-k}). \nonumber
\end{equation}
Note that $u_k(e_2^1, x_{-k})$ and $u_k(e_2^2, x_{-k})$ are the expected utility values when miner $k$ chooses strategy $s_k = 0$ and $s_k = 1$, respectively.

Let $\tilde{\beta}_k : X \to 2^{X_k}$ represent the best response correspondence of miner $k$ for mixed strategy $x$ as
\begin{align}
  &\tilde{\beta}_k(x) \nonumber \\
  &=\{x_k^{*} \in X_k \, | \, \forall x_k^{\prime} \in X_k \; u_k(x_k^{*}, x_{-k}) \geq u_k(x_k^{\prime}, x_{-k})\}. \nonumber
\end{align}
A mixed strategy $x^{*}$ satisfying $x^{*} \in \tilde{\beta}(x^{*})$ is called a Nash equilibrium, where $\tilde{\beta}(x) = \times_{k \in \mathcal{N}} \tilde{\beta}_k(x) \subseteq X$. We denote $\mbox{NE}(G_n)$ as the set of Nash equilibria in game $G_n$.

\section{The two-miners case}\label{sec:two_miners}
In this section, we focus on the case of two miners, namely, where $\mathcal{N} = \{1, 2\}$ and $f_k(c_k)$ is a linear function $f_k(c_k) = v_k c_k, \, v_k \in \mathbb{R}_{+}$ for all $k \in \mathcal{N}$.
We theoretically derive the Nash equilibria of game $G_2$, where
the utility functions of two miners are written as
\begin{align}
  &U_1(s) \nonumber \\
  &= \begin{cases}
    0 & \mbox{if} \; \; (s_1, s_2) = (0, 0), \\
    \frac{s_1}{s_1 + s_2 p_v p_c}\left(R - \frac{d}{s_1 + s_2 p_v p_c}\right) & \mbox{otherwise},
\end{cases}\label{eq:util_m1} \\
  &U_2(s) \nonumber \\
  &= \begin{cases}
  0 & \mbox{if} \; \; (s_1, s_2) = (0, 0), \\
  \frac{s_2 p_v p_c}{s_1 + s_2 p_v p_c}\left(R - \frac{dp_c}{s_1 + s_2 p_v p_c}\right) & \mbox{otherwise},
\end{cases}\label{eq:util_m2}
\end{align}
where $p_v \coloneqq v_2 / v_1 > 0$, $p_c \coloneqq c_2 / c_1 > 0$, and $d \coloneqq D / v_1 > 0$.
We can assume $p_v \geq 1 \; (0 < v_1 \leq v_2)$ without loss of generality\footnote{Intuitively, $v_k$ is the cost-effectiveness, and
$p_v$ and $p_c$ are parameters that represent ratios of cost-effectiveness and cost, respectively.
When $v_1$ is constant, the larger the difficulty $D$ (difficulty level $h$), the larger the value of $d$.}.

Strategic forms for finite two-player games are depicted as matrices. Tables~\ref{tab:pr_1} and \ref{tab:pr_2} show payoffs for the corresponding strategy profiles of miners~1 and 2, respectively.
\begin{table}[b]
  \centering
  \caption{Miner 1's payoff for the corresponding strategy profile.}
  \setlength{\extrarowheight}{2pt}
  \begin{tabular}{cc|c|c|}
    & \multicolumn{1}{c}{} & \multicolumn{2}{c}{$S_2$}\\
    & \multicolumn{1}{c}{} & \multicolumn{1}{c}{$0$}  & \multicolumn{1}{c}{$1$} \\\cline{3-4}
    \multirow{2}*{$S_1$}  & $0$ & $0$ & $0$ \\\cline{3-4}
    & $1$ & $R - d$ & $\displaystyle \frac{1}{1 + p_v p_c}\left(R - \frac{d}{1 + p_v p_c}\right)$ \\\cline{3-4}
  \end{tabular}
  \label{tab:pr_1}
\end{table}
\begin{table}[b]
  \centering
  \caption{Miner 2's payoff for the corresponding strategy profile.}
  \setlength{\extrarowheight}{2pt}
  \begin{tabular}{cc|c|c|}
    & \multicolumn{1}{c}{} & \multicolumn{2}{c}{$S_2$}\\
    & \multicolumn{1}{c}{} & \multicolumn{1}{c}{$0$}  & \multicolumn{1}{c}{$1$} \\\cline{3-4}
    \multirow{2}*{$S_1$}  & $0$ & $0$ & $\displaystyle R - \frac{d}{p_v}$ \\\cline{3-4}
    & $1$ & $0$ & $\displaystyle \frac{p_v p_c}{1 + p_v p_c}\left(R - \frac{dp_c}{1 + p_v p_c}\right)$ \\\cline{3-4}
  \end{tabular}
  \label{tab:pr_2}
\end{table}
We obtain the set of Nash equilibria for mixed strategies of game $G_2$ as in Proposition~\ref{prop:nash}.

\begin{prop}\label{prop:nash}
  Assume $p_v \geq 1$.
  We define functions $g_1: \mathbb{R} \to \mathbb{R}$ and $g_2: \mathbb{R} \to \mathbb{R}$ as
  \begin{align}
    &g_1(z) = p_v p_c \left(z - \frac{p_c}{1 + p_v p_c}\right) \left(\frac{2p_v p_c + 1}{p_v(1 + p_v p_c)} - z\right)^{-1}, \label{eq:gz1} \\
    &g_2(z) = \left(z - \frac{1}{1 + p_v p_c}\right)\left(p_v p_c \left(\frac{2 + p_v p_c}{1 + p_v p_c} - z\right)\right)^{-1}. \label{eq:gz2}
  \end{align}
  Respectively letting $\alpha_1$ and $\alpha_2$ be $\alpha_1 \coloneqq g_1(R/d)$ and $\alpha_2 \coloneqq g_2(R/d)$,
  the set $\mbox{NE}(G_2)$ is given as follows\footnote{
  Intuitively, the value $R/d$ represents how much reward is given for the difficulty,
  increasing with reward $R$ and decreasing with difficulty level $h$.}:

  If $p_c \geq 1$, then
  \begin{align}
    &\mbox{NE}(G_2) = \nonumber \\
    &\begin{cases}
      \{(e_2^1, e_2^1)\} \\
      \hspace{15mm} \mbox{if} \; \; \frac{R}{d} < \frac{p_c}{1 + p_v p_c}, \\
      \{(e_2^1, e_2^1), (e_2^2, (\gamma_2, 1 - \gamma_2)^{\mathrm{T}})\} \\
      \hspace{15mm} \mbox{if} \; \; \frac{R}{d} = \frac{p_c}{1 + p_v p_c}, \\
      \{(e_2^1, e_2^1), (e_2^2, e_2^2), ((\alpha_1, 1 - \alpha_1)^{\mathrm{T}}, (\alpha_2, 1 - \alpha_2)^{\mathrm{T}})\} \\
      \hspace{15mm} \mbox{if} \; \; \frac{p_c}{1 + p_v p_c} < \frac{R}{d} < \frac{1}{p_v}, \\
      \{(e_2^2, e_2^2), (e_2^1, (\delta_2, 1 - \delta_2)^{\mathrm{T}})\} \\
      \hspace{15mm} \mbox{if} \; \; \frac{R}{d} = \frac{1}{p_v}, \\
      \{(e_2^2, e_2^2)\} \\
      \hspace{15mm} \mbox{if} \; \; \frac{R}{d} > \frac{1}{p_v},
    \end{cases}\label{eq:NEG2_1}
  \end{align}
  where $\gamma_2 \in [0, g_2(p_c / (1 + p_v p_c))]$ and $\delta_2 \in [g_2(1 / p_v), 1]$.

  If instead $1 - 1 / p_v \leq p_c < 1$, then
  \begin{align}
    &\mbox{NE}(G_2) = \nonumber \\
    &\begin{cases}
      \{(e_2^1, e_2^1)\} \\
      \hspace{15mm} \mbox{if} \; \; \frac{R}{d} < \frac{1}{1 + p_v p_c}, \\
      \{(e_2^1, e_2^1), ((\varepsilon_1, 1 - \varepsilon_1)^{\mathrm{T}}, e_2^2)\}, \\
      \hspace{15mm} \mbox{if} \; \; \frac{R}{d} = \frac{1}{1 + p_v p_c}, \\
      \{(e_2^1, e_2^1), (e_2^2, e_2^2), ((\alpha_1, 1 - \alpha_1)^{\mathrm{T}}, (\alpha_2, 1 - \alpha_2)^{\mathrm{T}})\} \\
      \hspace{15mm} \mbox{if} \; \; \frac{1}{1 + p_v p_c} < \frac{R}{d} < \frac{1}{p_v}, \\
      \{(e_2^2, e_2^2), (e_2^1, (\delta_2, 1 - \delta_2)^{\mathrm{T}})\} \\
      \hspace{15mm} \mbox{if} \; \; \frac{R}{d} = \frac{1}{p_v}, \\
      \{(e_2^2, e_2^2)\} \\
      \hspace{15mm} \mbox{if} \; \; \frac{R}{d} > \frac{1}{p_v},
    \end{cases}\label{eq:NEG2_2}
  \end{align}
  where $\varepsilon_1 \in [0, g_1(1 / (1 + p_v p_c))]$ and $\delta_2 \in [g_2(1 / p_v), 1]$.

 Finally, if $0 < p_c < 1 - 1 / p_v$, then
  \begin{align}
    \mbox{NE}(G_2) = \begin{cases}
      \{(e_2^1, e_2^1)\} & \mbox{if} \; \; \frac{R}{d} < \frac{1}{p_v}, \\
      \{(e_2^1, (\zeta_2, 1 - \zeta_2)^{\mathrm{T}})\} & \mbox{if} \; \; \frac{R}{d} = \frac{1}{p_v}, \\
      \{(e_2^1, e_2^2)\} & \mbox{if} \; \; \frac{1}{p_v} < \frac{R}{d} < \frac{1}{1 + p_v p_c}, \\
      \{((\eta_1, 1 - \eta_1)^{\mathrm{T}}, e_2^2)\} & \mbox{if} \; \; \frac{R}{d} = \frac{1}{1 + p_c p_v}, \\
      \{(e_2^2, e_2^2)\} & \mbox{if} \; \; \frac{R}{d} > \frac{1}{1 + p_v p_c},
    \end{cases}\label{eq:NEG2_3}
  \end{align}
  where $\zeta_2 \in [0, 1]$ and $\eta_1 \in [0, 1]$.
\end{prop}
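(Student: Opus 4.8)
The plan is to characterise each miner's best-response correspondence as a function of the opponent's mixing probability and of the single scalar $z \coloneqq R/d$, and then intersect the two correspondences. Writing $x_1 = (p,1-p)^{\mathrm T}$ and $x_2 = (q,1-q)^{\mathrm T}$ (so $p = x_1^0$ and $q = x_2^0$), expanding the expected utilities against Tables~\ref{tab:pr_1} and \ref{tab:pr_2} gives $u_1(x) = (1-p)\,\phi_1(q)$ and $u_2(x) = (1-q)\,\phi_2(p)$, where $\phi_1(q) = q(R-d) + (1-q)\frac{1}{1+p_vp_c}\!\left(R-\frac{d}{1+p_vp_c}\right)$ and $\phi_2(p) = p\!\left(R-\frac{d}{p_v}\right) + (1-p)\frac{p_vp_c}{1+p_vp_c}\!\left(R-\frac{dp_c}{1+p_vp_c}\right)$ are affine in $q$ and $p$. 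Hence $\tilde{\beta}_1$ chooses $p=0$ (play $s_1=1$) when $\phi_1(q)>0$, $p=1$ when $\phi_1(q)<0$, and all of $X_1$ when $\phi_1(q)=0$, and symmetrically $\tilde{\beta}_2$ is governed by the sign of $\phi_2(p)$; $\mbox{NE}(G_2)$ is the set of fixed points of $x\mapsto\tilde{\beta}_1(x)\times\tilde{\beta}_2(x)$.

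Then I would record the endpoint data: $\phi_1(0)>0\iff z>\frac{1}{1+p_vp_c}$, $\phi_1(1)>0\iff z>1$, $\phi_2(0)>0\iff z>\frac{p_c}{1+p_vp_c}$, $\phi_2(1)>0\iff z>\frac{1}{p_v}$ (with the obvious variants for $=$ and $<$). Since $\phi_1$ and $\phi_2$ are affine and neither is identically zero, each has a unique zero; when $\phi_1(0)$ and $\phi_1(1)$ have opposite signs this zero $q^{\star}$ lies in $(0,1)$, and a short computation (solving $\phi_1(q^{\star})=0$) gives exactly $q^{\star}=g_2(z)$, while the zero of $\phi_2$, when interior, equals $p^{\star}=g_1(z)$. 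These two identities are the only real computation; they also show $g_1,g_2$ are well defined on the relevant $z$-ranges and take values in $(0,1)$ there, using $g_1\!\left(\frac{p_c}{1+p_vp_c}\right)=0$, $g_1\!\left(\frac{1}{p_v}\right)=1$, $g_2\!\left(\frac{1}{1+p_vp_c}\right)=0$, $g_2(1)=1$ together with monotonicity.

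The remainder is a finite case analysis driven by the order of the four break-points $\frac{1}{1+p_vp_c}$, $1$, $\frac{p_c}{1+p_vp_c}$, $\frac{1}{p_v}$. Using $p_v\ge 1$ one has $\frac{p_c}{1+p_vp_c}<\frac{1}{p_v}\le 1$ and $\frac{1}{1+p_vp_c}<1$ always, whereas $\frac{1}{1+p_vp_c}$ lies below, at, or above $\frac{p_c}{1+p_vp_c}$ according as $p_c>1$, $p_c=1$, or $p_c<1$, and below, at, or above $\frac{1}{p_v}$ according as $p_c>1-\frac{1}{p_v}$, $p_c=1-\frac{1}{p_v}$, or $p_c<1-\frac{1}{p_v}$; these comparisons single out exactly the regimes $p_c\ge 1$, $1-\frac{1}{p_v}\le p_c<1$, and $0<p_c<1-\frac{1}{p_v}$ of the statement. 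Within each regime I would step through the sub-intervals of $z$ cut out by the break-points, read off $\tilde{\beta}_1$ and $\tilde{\beta}_2$ from the sign data above, and list the fixed points. On the open sub-intervals one obtains $(e_2^1,e_2^1)$ and/or $(e_2^2,e_2^2)$, plus --- exactly on the sub-interval where both players' interior crossings occur simultaneously, which is the ``three-equilibria'' region $\frac{p_c}{1+p_vp_c}<z<\frac{1}{p_v}$ in the first regime and $\frac{1}{1+p_vp_c}<z<\frac{1}{p_v}$ in the second, and is empty in the third --- the single interior profile $\bigl((g_1(z),1-g_1(z))^{\mathrm T},(g_2(z),1-g_2(z))^{\mathrm T}\bigr)$, whence $\alpha_1=g_1(R/d)$ and $\alpha_2=g_2(R/d)$.

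The step I expect to be the main obstacle is the analysis at the break-point values of $z$, where one of $\phi_1,\phi_2$ vanishes at an endpoint ($q=0$, $q=1$, or $p=1$), so the corresponding best-response correspondence becomes all of $X_k$ there and a one-parameter family of equilibria emerges. Each family's range must be pinned down by identifying which opponent mixing probabilities keep the first player's pure choice a best response: for instance, at $z=\frac{p_c}{1+p_vp_c}$ in the regime $p_c\ge 1$, miner~2 is indifferent only against $s_1=1$ and miner~1 retains $s_1=1$ as a best response iff $q\le g_2(z)$, giving $\gamma_2\in[0,g_2(\frac{p_c}{1+p_vp_c})]$; at $z=\frac{1}{p_v}$ miner~2 is indifferent only against $s_1=0$ and miner~1 keeps $s_1=0$ iff $q\ge g_2(\frac{1}{p_v})$, giving $\delta_2\in[g_2(\frac{1}{p_v}),1]$; and the families $\varepsilon_1\in[0,g_1(\frac{1}{1+p_vp_c})]$, $\zeta_2\in[0,1]$, $\eta_1\in[0,1]$ arise the same way in the other regimes. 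Finally I would verify that the degenerate sub-cases --- $p_v=1$ (where $\frac{1}{p_v}$ coincides with miner~1's upper break-point $1$) and the regime endpoints $p_c=1$ and $p_c=1-\frac{1}{p_v}$ (where two break-points merge) --- agree with the three displayed formulas by continuity of $g_1$ and $g_2$, completing the proof.
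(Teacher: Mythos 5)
Your proposal is correct and follows essentially the same route as the paper: the paper likewise reduces each miner's best response to the sign of the payoff difference $u_k(e_2^2,x_{-k})-u_k(e_2^1,x_{-k})$ (your $\phi_k$, affine in the opponent's mixing probability), identifies the interior indifference points with $g_1(R/d)$ and $g_2(R/d)$, orders the break-points $\frac{p_c}{1+p_vp_c}$, $\frac{1}{1+p_vp_c}$, $\frac{1}{p_v}$, $1$ according to the three $p_c$-regimes, and enumerates fixed points of the best-response correspondence interval by interval (the paper tabulates this case analysis, and handles the boundary continua $\gamma_2,\delta_2,\varepsilon_1,\zeta_2,\eta_1$ exactly as you describe). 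The only cosmetic difference is that you parametrize the cases by endpoint signs $\phi_k(0),\phi_k(1)$ while the paper parametrizes them by whether $g_\ell(R/d)$ lies below, inside, or above $[0,1]$; these are equivalent.
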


\begin{proof}
  Because the pure strategy sets $S_1$ and $S_2$ are finite, there exists at least one Nash equilibrium in game $G_2$ in the mixed strategies~\cite{game_t}.
  $x_1 = (x_1^0, 1 - x_1^0)^{\mathrm{T}}$, $0 \leq x_1^0 \leq 1$ and $x_2 = (x_2^0, 1 - x_2^0)^{\mathrm{T}}$, $0 \leq x_2^0 \leq 1$ represent mixed strategies of miners~1 and 2, respectively.
  We denote the expected payoffs of miners~1 and 2 as $u_1(x_1, x_2)$ and $u_2(x_1, x_2)$, respectively.
  The difference in expected payoffs between $s_k = 1$ and $s_k = 0$ for miner $k$ is
  \begin{align}
    &u_1(e_2^2, x_2) - u_1(e_2^1, x_2) \nonumber \\
    &= \frac{d}{1 + p_v p_c}\left(
    \left(\frac{R}{d} - \frac{1}{1 + p_v p_c}\right)\right. \nonumber \\
    &\hspace{25mm}+ \left. p_vp_c\left(\frac{R}{d} - \frac{2 + p_v p_c}{1 + p_v p_c}\right)x_2^0 \right) \label{eq:dif_u1}
  \end{align}
  and
  \begin{align}
    &u_2(x_1, e_2^2) - u_2(x_1, e_2^1) \nonumber \\
    &= \frac{d}{1 + p_v p_c}\left(
    p_v p_c \left(\frac{R}{d} - \frac{p_c}{1 + p_v p_c}\right)\right. \nonumber \\
    &\hspace{25mm}+ \left. \left(\frac{R}{d} - \frac{2p_v p_c + 1}{p_v (1 + p_v p_c)}\right)x_1^0 \right). \label{eq:dif_u2}
  \end{align}
  From \eqref{eq:dif_u1} and \eqref{eq:dif_u2}, for another miner $\ell \in \mathcal{N} \setminus \{k\}$, the best response $\tilde{\beta}_k(x)$ of miner $k$ changes depending on the value range for $g_{\ell}(R / d)$.
  We thus consider five cases: $g_{\ell}(R / d) < 0$, $g_{\ell}(R / d) = 0$, $0 < g_{\ell}(R / d) < 1$, $g_{\ell}(R / d) = 1$, and $g_{\ell}(R / d) > 1$.

  When $0 < g_{\ell}(R/d) < 1$, the sign of the difference between miner $k$'s expect payoff for $s_k = 1$ and $s_k = 0$ changes depending on the value of $x_{\ell}^0$ as follows\footnote{We present the case where $k = 1$ and $\ell = 2$, but the same applies when $k = 2$ and $\ell = 1$.}.
  \begin{equation}
    \begin{cases}
      u_k(e_2^2, x_{\ell}) - u_k(e_2^1, x_{\ell}) < 0 & \mbox{if} \; \; g_{\ell}(R/d) < x_{\ell}^0 \leq 1, \\
      u_k(e_2^2, x_{\ell}) - u_k(e_2^1, x_{\ell}) = 0 & \mbox{if} \; \; x_{\ell}^0 = g_{\ell}(R/d), \\
      u_k(e_2^2, x_{\ell}) - u_k(e_2^1, x_{\ell}) > 0 & \mbox{if} \; \; 0 \leq x_{\ell}^0 < g_{\ell}(R/d).
    \end{cases} \nonumber
  \end{equation}
  Therefore, the best response $\tilde{\beta}_{k}(x)$ is
  \begin{equation}
    \tilde{\beta}_k(x) = \begin{cases}
      \{e_2^1\} & \mbox{if} \; \; g_{\ell}(R / d) < x_{\ell}^0 \leq 1, \\
      X_k & \mbox{if} \; \; x_{\ell}^0 = g_{\ell}(R/d), \\
      \{e_2^2\} & \mbox{if} \; \; 0 \leq x_{\ell}^0 < g_{\ell}(R / d).
  \end{cases}\label{eq:bt_pat3}
  \end{equation}
  Similarly, we obtain the best response $\tilde{\beta}_{k}(x)$ depending on the value range for $g_{\ell}(R/d)$ as follows:
  \begin{itemize}
    \item When $g_{\ell}(R / d) < 0$,
    \begin{equation}
      \tilde{\beta}_k(x) = \{e_2^1\}. \label{eq:bt_pat1}
    \end{equation}
    \item When $g_{\ell}(R / d) = 0$,
    \begin{equation}
      \tilde{\beta}_k(x) = \begin{cases}
        X_k & \mbox{if} \; \; x_{\ell}^0 = 0, \\
        \{e_2^1\} & \mbox{if} \; \; 0 < x_{\ell}^0 \leq 1.
    \end{cases}\label{eq:bt_pat2}
    \end{equation}
    \item When $g_{\ell}(R / d) = 1$,
    \begin{equation}
      \tilde{\beta}_k(x) = \begin{cases}
        X_k & \mbox{if} \; \; x_{\ell}^0 = 1, \\
        \{e_2^2\} & \mbox{if} \; \; 0 \leq x_{\ell}^0 < 1.
    \end{cases}\label{eq:bt_pat4}
    \end{equation}
    \item When $g_{\ell}(R / d) > 1$,
    \begin{equation}
      \tilde{\beta}_k(x) = \{e_2^2\}. \label{eq:bt_pat5}
    \end{equation}
  \end{itemize}

  We next derive the range of $R/d$ satisfying $0 < g_k(R/d) < 1, \, k = 1, 2$.
  From \eqref{eq:gz1} and \eqref{eq:gz2}, it is easily shown that $g_1$ and $g_2$ are monotonically increasing functions. Therefore, we obtain
  \begin{align}
    &0 < g_1(R / d) < 1 \Rightarrow \frac{p_c}{1 + p_v p_c} < \frac{R}{d} < \frac{1}{p_v}, \label{eq:g1_range} \\
    &0 < g_2(R/d) < 1 \Rightarrow \frac{1}{1 + p_v p_c} < \frac{R}{d} < 1. \label{eq:g2_range}
  \end{align}
  Assuming $p_v \geq 1$, the magnitude relationship among $p_c/(1 + p_v p_c)$, $1 / p_v$, $1 / (1 + p_v p_c)$, and $1$ depends on the value of $p_c$, as follows:
  \begin{enumerate}
    \item When $p_c \geq 1$, \label{enu:case1}
    \begin{equation}
      \frac{1}{1 + p_v p_c} \leq \frac{p_c}{1 + p_v p_c} < \frac{1}{p_v} \leq 1. \nonumber
    \end{equation}
    \item When $1 - 1/p_v \leq p_c < 1$, \label{enu:case2}
    \begin{equation}
      \frac{p_c}{1 + p_v p_c} < \frac{1}{1 + p_v p_c} \leq \frac{1}{p_v} \leq 1. \nonumber
    \end{equation}
    \item When $0 < p_c < 1 - 1/p_v$, \label{enu:case3}
    \begin{equation}
      \frac{p_c}{1 + p_v p_c} < \frac{1}{p_v} < \frac{1}{1 + p_v p_c} < 1. \nonumber
    \end{equation}
  \end{enumerate}

  We derive the set of Nash equilibria in the case of \ref{enu:case1}).
  Then, we have nine cases depending on the value range of $R/d$.
  Table~\ref{tab:rdnash} shows the relation between $R/d$ and the set of Nash equilibria.
  Therefore, \eqref{eq:NEG2_1} is the set of Nash equilibria.
  We can similarly prove \eqref{eq:NEG2_2} and \eqref{eq:NEG2_3}.
\end{proof}

\begin{table*}[t]
  \centering
  \caption{Relation between $R/d$ and the set of Nash equilibria.}
  \begin{tabular}{|c|c|c|c|}\hline
    $R/d$ value & $\alpha_1$ & $\alpha_2$ & Set of Nash equilibria \\ \hline
    $0 < R/d < 1 / (1 + p_v p_c)$ & $\alpha_1 < 0$ & $\alpha_2 < 0$ & $\{(e_2^1, e_2^1)\}$ \\
    $R/d = 1 / (1 + p_v p_c)$ & $\alpha_1 < 0$ & $\alpha_2 = 0$ & $\{(e_2^1, e_2^1)\}$ \\
    $1 / (1 + p_v p_c) < R / d < p_c / (1 + p_v p_c)$ & $\alpha_1 < 0$ & $0 < \alpha_2 < 1$ & $\{(e_2^1, e_2^1)\}$ \\
    $R/d = p_c / (1 + p_v p_c)$ & $\alpha_1 = 0$ & $0 < \alpha_2 < 1$ & $\{(e_2^1, e_2^1), (e_2^2, (\gamma_2, 1 - \gamma_2)^{\mathrm{T}})\}, \; \; \gamma_2 \in [0, g_2(p_c/(1 + p_v p_c))]$ \\
    $p_c / (1 + p_v p_c) < R / d < 1 / p_v$ & $0 < \alpha_1 < 1$ & $0 < \alpha_2 < 1$ & $\{(e_2^1, e_2^1), (e_2^2, e_2^2), ((\alpha_1, 1 - \alpha_1)^{\mathrm{T}}, (\alpha_2, 1 - \alpha_2)^{\mathrm{T}})\}$ \\
    $R/d = 1 / p_v$ & $\alpha_1 = 1$ & $0 < \alpha_2 < 1$ & $\{(e_2^2, e_2^2), (e_2^1, (\delta_2, 1 - \delta_2)^{\mathrm{T}})\}, \; \; \delta_2 \in [g_2(1 / p_v), 1]$ \\
    $1 / p_v < R / d < 1$ & $\alpha_1 > 1$ & $0 < \alpha_2 < 1$ & $\{(e_2^2, e_2^2)\}$ \\
    $R/d = 1$ & $\alpha_1 > 1$ & $\alpha_2 = 1$ & $\{(e_2^2, e_2^2)\}$ \\
    $R/d > 1$ & $\alpha_1 > 1$ & $\alpha_2 > 1$ & $\{(e_2^2, e_2^2)\}$ \\ \hline
  \end{tabular}
  \label{tab:rdnash}
\end{table*}

For a given $p_v$, $p_c$ affects change in the Nash equilibria depending on $R/d$.
Fig.~\ref{fig:pcR} shows the $p_c-R/d$ parameter plane where $p_v \geq 1$ is fixed.
If the pair $(p_c, R/d)$ is in region~(a), (b), (c), or (d), set $\mbox{NE}(G_2)$ satisfies
$\mbox{NE}(G_2) = \{(e_2^2, e_2^2)\}$,
$\mbox{NE}(G_2) = \{(e_2^1, e_2^1), (e_2^2, e_2^2), ((\alpha_1, 1 - \alpha_1)^{\mathrm{T}}, (\alpha_2, 1 - \alpha_2)^{\mathrm{T}})\}$,
$\mbox{NE}(G_2) = \{(e_2^1, e_2^1)\}$, or
$\mbox{NE}(G_2) = \{(e_2^1, e_2^2)\}$, respectively\footnote{$\mbox{NE}(G_2)$ for the boundary between regions~(a) and (b) is given by the fourth equation in \eqref{eq:NEG2_1} and the fourth equation in \eqref{eq:NEG2_2}.
Similarly, $\mbox{NE}(G_2)$ for the boundaries of regions~(b) and (c), (c) and (d), and (d) and (a) are given by the second equation in \eqref{eq:NEG2_1} if $p_c \geq 1$ and the second equation in \eqref{eq:NEG2_2} if $1 - 1/p_v < p_c < 1$, the second equation in \eqref{eq:NEG2_3}, and the fourth equation in \eqref{eq:NEG2_3}, respectively.
$\mbox{NE}(G_2)$ when the pair $(p_c, R/d) = (1 - 1/p_v, 1/p_v)$ is given by $\mbox{NE}(G_2) = \{((\varepsilon_1, 1 - \varepsilon_1)^{\mathrm{T}}, e_2^2), (e_2^1, (\delta_2, 1 - \delta_2)^{\mathrm{T}})\}$,
where $\varepsilon_1 \in [0, 1]$ and $\delta_2 \in [0, 1]$.}.

\begin{figure}[tb]
  \centering
  \includegraphics[clip, width = 7.5cm]{./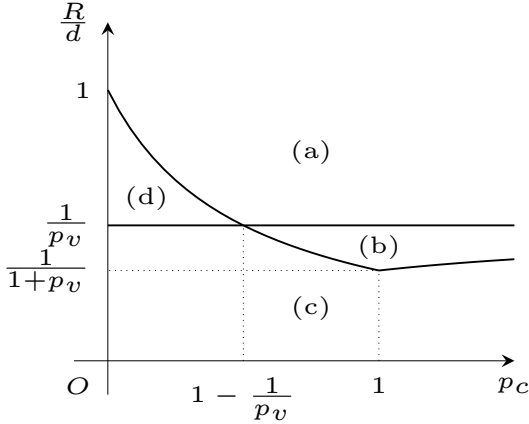}
  \caption{The $p_c-R/d$ parameter plane where $p_v \geq 1$ is fixed.}
  \label{fig:pcR}
\end{figure}

Figs.~\ref{fig:eq_1}--\ref{fig:eq_3} show the relation between $R/d$ and $x_k^0$ ($k = 1, 2$) in Nash equilibria.
The blue (left) and red (right) lines represent values for $x_1^0$ and $x_2^0$, respectively.

\begin{figure}[tb]
  \centering
  \includegraphics[clip, width = 7.5cm]{./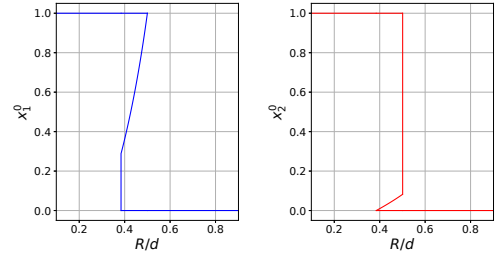}
  \caption{The case when $(p_v, p_c) = (2, 0.8)$.}
  \label{fig:eq_1}
\end{figure}

Fig.~\ref{fig:eq_1} shows the case where $p_c > 1 - 1/p_v$ is fixed.
Consider the case where $1 - 1/p_v < p_c < 1$.
The pure strategy profile $s = (0, 0)$ is a Nash equilibrium if $R/d$ is smaller than $1/p_v$.
If $R/d$ is larger than $1/p_v$, the pure Nash equilibrium $s = (0, 0)$ disappears and only the pure strategy profile $s = (1, 1)$ is a Nash equilibrium.

The pure strategy profile $s = (1, 1)$ is a Nash equilibrium if $R/d$ is larger than $1/(1 + p_v p_c)$.
If $R/d$ is smaller than $1/(1 + p_v p_c)$, the pure Nash equilibrium $s = (1, 1)$ disappears and only the pure strategy profile $s = (0, 0)$ is a Nash equilibrium.

This change in Nash equilibria due to change in $R/d$ implies that when the mining reward exceeds some value, all miners will decide to participate in the mining, after which they continue for a while even if the reward decreases to the boundary of region~(b). Thus, a hysteresis phenomenon exists in the region.
Moreover, a jump phenomenon regarding the strategy profiles miners choose occurs owing to the disappearance of Nash equilibria when the reward changes across the region boundary.

\begin{figure}[tb]
  \centering
  \includegraphics[clip, width = 7.5cm]{./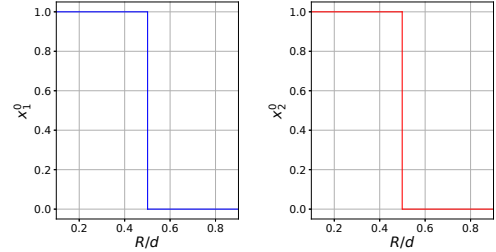}
  \caption{The case when $(p_v, p_c) = (2, 0.5)$.}
  \label{fig:eq_2}
\end{figure}

Fig.~\ref{fig:eq_2} shows the case where $p_c = 1 - 1/p_v$.
A transition from region~(c) to (a), that is, a jump from one pure strategy profile to another, is observed when $R/d$ equals $1/p_v = 1/(1 + p_v p_c)$.
This transition is only seen when $p_c = 1 - 1 / p_v$.

\begin{figure}[tb]
  \centering
  \includegraphics[clip, width = 7.5cm]{./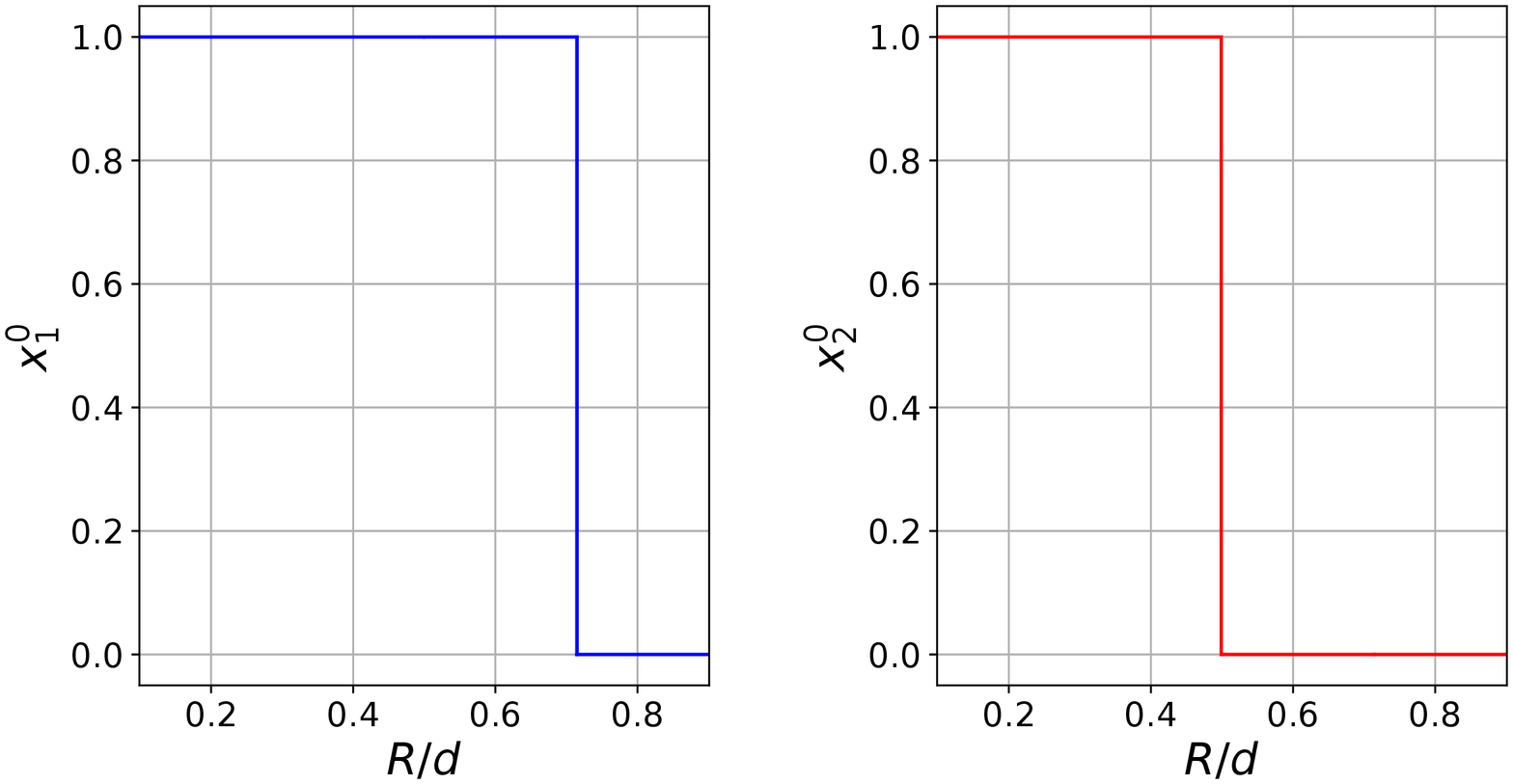}
  \caption{The case when $(p_v, p_c) = (2, 0.2)$.}
  \label{fig:eq_3}
\end{figure}

Fig.~\ref{fig:eq_3} shows the case where $p_c < 1 - 1/p_v$ is fixed.
The pure strategy profile $s = (0, 1)$ is a Nash equilibrium if $1/p_v < R/d < 1/(1 + p_v p_c)$.
Two pure Nash equilibria do not coexist in the interior of each region while mixed strategy profile $(e_2^1, (\zeta_2, 1 - \zeta_2)^{\mathrm{T}})$ and $((\eta_1, 1 - \eta_1)^{\mathrm{T}}, e_2^2)$ are Nash equilibria on the boundary $R/d = 1/p_v$ and $R/d = 1/(1 + p_v p_c)$.

\section{Numerical Analysis}\label{sec:numerical_a}
In this section, we use Gambit~\cite{gambit} to numerically calculate Nash equilibria for the game $G_n$ with $n > 2$.
We estimate $c_k = c$ and $f_k(c) = vc, \, v > 0$ for all $k \in \mathcal{N}$\footnote{In this example, all miners calculate the same number of hash queries per unit operating time.}. From \eqref{eq:utility} and these assumptions, $U_k(s)$ can be rewritten as
\begin{equation}
  U_k(s) = \begin{cases}
    0 & \mbox{if} \; \; s = (0, \ldots, 0), \\
    \frac{s_k}{\sum_{i \in \mathcal{N}} s_i} \left(R - \frac{d}{\sum_{i \in \mathcal{N}} s_i}\right) & \mbox{otherwise},
\end{cases}\nonumber
\end{equation}
where $d \coloneqq D/v$. Value $d$ is fixed to 100 and set $R \in \{0, 1, \ldots, 150\}$. We calculate Nash equilibria for each $R$ when the number of miners $n$ is 2, 3, 4, 5, and 6.

\begin{figure}[tb]
  \centering
  \includegraphics[clip, width = 7.5cm]{./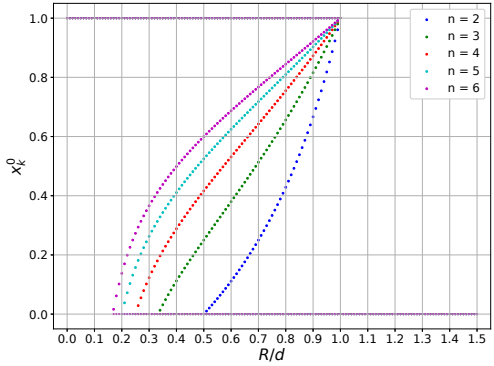}
  \caption{Relation between $R/d$ and $x_k^0$ in Nash equilibria for each the number of miners $n$.}
  \label{fig:num_1}
\end{figure}

Fig.~\ref{fig:num_1} shows the relation between $R/d$ and $x_k^0$ in Nash equilibria for numbers of miners $n = 2, 3, 4, 5, 6$.
This result shows that both the hysteresis phenomena and the jump phenomenon of the strategy profiles can be observed regardless of the number of miners when all miners pay the same cost and calculate the same number of hash queries per unit operating time.
In addition, this implies that as the number of miners increases, $R/d$ for the appearance of a Nash equilibrium $s = (1, \ldots, 1)$ decreases.

\begin{figure}[tb]
  \centering
  \includegraphics[clip, width = 7.5cm]{./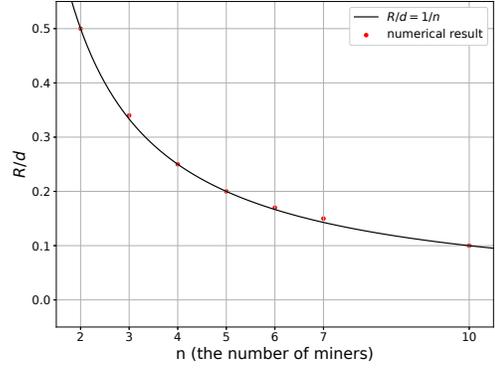}
  \caption{Relation between number of miners and $R/d$ when the Nash equilibrium $s = (1, \ldots, 1)$ appears.}
  \label{fig:num_2}
\end{figure}

Fig.~\ref{fig:num_2} shows the relation between the number of miners and $R/d$ when the Nash equilibrium $s = (1, \ldots, 1)$ appears. The figure shows that $R/d$ when the equilibrium $s = (1, \ldots, 1)$ appears is inversely proportional to the number of miners $n$.

This result shows that once miners decide to participate in the mining, they continue to mine for smaller rewards as their number increases.
Thus, it is very important when designing blockchain networks to set the largest possible initial reward as an incentive for mining in the network.
The reward can later be decreased, after the number of participating miners increases, without decreasing their number.

\section{Conclusion}\label{sec:conclusion}
Modeling a decision-making problem for mining participation as a noncooperative game, we showed that hysteresis phenomena due to the coexistence of two pure Nash equilibria and jump phenomena in the choice of strategy profiles can be observed with changes in the mining reward.
Moreover, numerical calculations showed that miners continue mining for smaller rewards as their number increases.
In general, it is difficult to analyze the miner behavior as $n$ increases.
In future work we will theoretically analyze Nash equilibria by deriving a macro model of the group of the miners.

\end{document}